\newcommand{\SVD}{\operatorname{SVD}}
\newcommand{\R}{{\mathbb R}}
\renewcommand{\S}{{\mathbb S}}
\newcommand{\E}{{\mathbb E}}
\newcommand{\N}{{\mathcal N}}
\newcommand{\eps}{\varepsilon}
\newcommand{\diag}{\operatorname{diag}}
\newcommand{\FI}{Frequent-Items}
\newcommand{\FD}{Frequent-Directions}
\newcommand{\myfigurewidth}{0.7}
\newtheorem{claim}{Claim}
\newtheorem{remark}{Remark}
\title{Simple and Deterministic Matrix Sketching}
\author{Edo Liberty\thanks{Yahoo! Research}}
\date{\nonumber}
\begin{document}
\maketitle

\begin{abstract}
We adapt a well known streaming algorithm for approximating item frequencies to the matrix sketching setting. 
The algorithm receives the rows of a large matrix $A \in \R^{n \times m}$ one after the other in a streaming fashion. 
It maintains a sketch matrix $B \in \R^ { 1/\eps  \times m}$ such that for any unit vector $x$
\[
\|Ax\|^2 \ge \|Bx\|^2 \ge  \|Ax\|^2 - \eps \|A\|_{f}^2 \ .
\]
Sketch updates per row in $A$ require $O(m/\eps^2)$ operations in the worst case.
A slight modification of the algorithm allows for an amortized update time of $O(m/\eps)$ operations per row. 
The presented algorithm stands out in that it is: deterministic, simple to implement, and elementary to prove.
It also experimentally produces more accurate sketches than widely used approaches while still being computationally competitive.
\end{abstract}

\section{Introduction}
Efficiently obtaining compact approximations, or sketches, of large matrices is a very common and important problem.
It is essential for obtaining approximate matrix Singular Value Decompositions (SVD) or low rank approximations of large matrices.
It is used in large scale data mining, Latent Semantic Indexing (LSI), $k$-means clustering, Principal Component Analysis (PCA), Linear regression, solving large linear systems, matrix preconditioning, and many other important and commonly performed tasks.

Moreover, modern large data sets are often viewed as large matrices.
For example, textual data in the bag-of-words model is stored such that each row in the data matrix 
corresponds to one document and non zero entries correspond to words in the documents.
Such matrices are often extremely large and distributed across many machines.
Sketching methods, therefore, are designed to be pass efficient which means the data is read at most a constant number of times.
If only one pass is required the computational model is also referred to as the streaming model. 
The streaming model is especially attractive since the analysis can be performed immediately when the data is collected.
In that case its storage can be eliminated altogether. 

There are three main matrix sketching approaches which are presented here in an arbitrary order.
The first generates a sparser version of the matrix. 
Sparser matrices are stored more efficiently and can be multiplied by other matrices 
faster \cite{AroraHazanKale2006}\cite{AchlioptasMcsherry2007}.
The second approach is to randomly combine matrix rows  \cite{PapadimitriouTRV1998}\cite{Vempala2004}\cite{Sarlos06}\cite{tygert07PNAS}.
These methods rely on properties of random low dimensional subspaces and strong concentration of measure phenomena.
The third is to find a small subset of matrix rows (or columns) which approximate the entire matrix. 
This problem is known as the Column Subset Selection Problem and has been thoroughly investigated 
\cite{FriezeKannanVempala1998}\cite{Drineas03passefficient}\cite{BoutsidisMahoneyDrineas2009}\cite{DeshpandeV06}\cite{DrineasMohoneyMuthukrishnan2011}\cite{BoutsidisDrineasMagdon2011}. Recent results obtain algorithms with almost matching lower bounds \cite{DeshpandeV06}\cite{BoutsidisDrineasMagdon2011}\cite{ClarksonWoodruff2009}.
Alas, It is not immediately clear how to compare these methods' results to ours since their objectives are different.
They aim to recover a low rank matrix whose column space contains most of space spanned by the matrix top $k$ singular vectors.
Moreover, most of the above algorithms require several passes over the input matrix.
A simple streaming solution to the Column Subset Selection problem is obtained by sampling columns (rows in this case) from the input matrix.
The rows are sampled with probability proportional to their squared norm.
Despite this algorithm's apparent simplicity, providing tight bounds for its performance required over a decade of research \cite{FriezeKannanVempala1998}\cite{AhlswedeW02}\cite{DrineasKannan2003}\cite{RudelsonVershyninMatrixSampling2007}\cite{VershyninMatrixChernoffBounds}\cite{Oliviera2010}\cite{DrineasMohoneyMuthukrishnan2011}. 

This manuscript proposes a forth approach which draws on the matrix sketching problem's similarity to the item frequency estimation problem.
In what follows, we shortly describe the item frequency approximation problem and a well known solution for it
which our proposed algorithm will be based on.

In the item frequency approximation problem there is a universe 
of $m$ items $a_1,\ldots,a_m$ and a stream $A_1,\dots,A_{n}$ of item appearances.
The frequency of an item is the number of times it appears in the stream.
It is trivial to produce these counts using $O(m)$ space simply by keeping a counter for each item.
Our goal is to use $o(m)$ space and produce approximate 
frequencies $g_j$ such that $ |f_j - g_j | \le \eps n$ for all $j$ and prescribed precision $\eps$.
%
%

This problem received an incredibly simple and beautiful solution by \cite{Misra1982}.
This solution was later and independently rediscovered 
by \cite{DemaineLopezAlejandroMunro2002} and \cite{Karp03} which also improved its update complexity.
The algorithm simulates the process of repeatedly `deleting' form the 
stream $\ell = 1/\eps$ appearances of {\it different} items until this is no longer possible. 
In other words, until there are less than $\ell$ unique items left. 
This trimmed stream is stored concisely in $O(\ell)$ space.
The claim is that if item $a_j$ appears in the final trimmed stream $g_j$ times 
than $g_j$ is a good approximation for its true frequency $f_j$ (even if $g_j=0$).
This is because $f_j - g_j \le t$ where $t$ is the number of times items were deleted (item of type $a_j$ is deleted at most once in each deletion batch). 
Moreover, we delete $\ell$ items in every batch and at most $n$ items can be deleted altogether. Thus, $t\ell \le n$ or $t \le \eps n$ which completes the proof.
The reader is referred to \cite{Karp03} for an efficient streaming implementation.
From this point on we refer to this algorithm as \FI. 

Let us now describe the item frequency problem as a matrix sketching problem. 
Let $A$ be a stream of indicator vectors $A_{i} \in \R^{m}$ instead of discrete elements.
Each row in $A$ is $A_i = e_j$ (the $j$'th standard basis vector) if $A_i = a_j$.
The frequency $f_j$ can be expressed as $f_j = \|A e_j\|^{2}$. 
A good approximation matrix $B$ would be one such that $g_j = \|B e_j\|^{2}$ is a good approximation to $f_j$.
Replacing $n= \|A\|_{f}^{2}$ we get that the condition $| f_j - g_j | \le \eps n$
is equivalent to $|\|Ae_j\|^2 - \|Be_j\|^2| \le \eps \|A\|_{f}^{2}$. 
From the above, it is clear that for `item indicator' matrices a sketch $B \in \R^{1/\eps \times m}$ can be
obtained by the \FI~algorithm.

In this paper we suggest \FD~which is an extension of \FI~to general matrices. 
Given any matrix $A \in \R^{n \times m}$ a sketch $B \in \R^{1/\eps \times m}$ is produced such that:
\[
\forall x \in \S^{m-1} \;\; \left|\|Ax\|^2 - \|Bx\|^2\right|  \le \eps \|A\|_{f}^{2} \mbox{\;\;\;\;\; or alternatively \;\; \;\;\;}  \|A^{T}A -  B^{T}B\| \le \eps\cdot tr(A^{T}A)
\]
The intuition behind \FD~is surpassingly similar to the one above.
In the same way that \FI~periodically deletes $\ell$ different element, \FD~periodically removes from its sketch $\ell$ orthogonal vectors.
This means that the Frobenius norm of the trimmed sketch matrix reduces by a factor $\ell$ faster than its projection on any single direction.
Since the final sketch's Frobenius norm is non negative, we are guarantied that no direction in space is reduced by `too much'. 
This intuition exact below.
As a remark, when presented with and `item indicator' matrix \FD~exactly mimics \FI.

\section{The algorithm}\label{thealg}

\begin{algorithm} 
\caption{\FD}
\label{alg}
\begin{algorithmic}
\STATE {\bf Input:} $\eps \in (0,1], \;\;A \in \R^{n \times m}$ 
\STATE $\ell \leftarrow \lceil 1/\eps \rceil$
\STATE $B \leftarrow $ all zeros matrix $\in \R^{\ell \times m}$  
\FOR{$i \in [n]$}
	\STATE $B_{\ell} \leftarrow A_i$  \hfill \# $B_{\ell}$ denotes the $\ell$'th row of $B$
	\STATE $[U,\Sigma,V] \leftarrow SVD(B)$  \hfill \# $U \Sigma V = B$,  \;$U^{T}U = VV^{T} = I_{\ell}$
	\STATE \hfill \# $\Sigma = \diag([\sigma_1,\ldots,\sigma_\ell])$,  \;$\sigma_1 \ge \ldots \ge \sigma_\ell$
	\STATE $C \leftarrow  \Sigma V$  \hfill \# Only for proof notation
	\STATE $\delta_i \leftarrow \Sigma_{\ell,\ell}^2$ \hfill \# $\Sigma_{\ell,\ell}$ the least singular value of $B$
	\STATE $\bar{\Sigma} \leftarrow \sqrt{\max(\Sigma^2 - I_{\ell}\delta_i,0)}$ \hfill \# $I_{\ell}$ is an $\ell \times \ell$ identity matrix
	\STATE $B \leftarrow \bar{\Sigma} V$ \hfill \# Here $B_{\ell}$ contains zeros since $\bar{\Sigma}_{\ell,\ell} = 0$
\ENDFOR
\STATE {\bf Return:} $B$ 
\end{algorithmic}
\end{algorithm}

\begin{claim}
Let $B$ be the result of applying Algorithm~\ref{alg} to matrix $A$ with prescribed precision parameter $\eps$ then:
\[
\forall x\in \R^{d} \;\;\;\;  \|Ax\|^2  \ge \|Bx\|^2 \ge \|Ax\|^2 - \eps \|A\|_{f}^{2}\|x\|^2
\] 
\noindent Or alternatively:
\[
A^{T}A \succeq B^{T}B \mbox{\;\; and \;\;} \|A^{T}A - B^{T}B\| \le \eps\cdot tr(A^{T}A) 
\]
\end{claim}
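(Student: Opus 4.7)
The plan is to express $A^TA - B^TB$ as a telescoping sum over the $n$ iterations of the algorithm and then to bound it in operator norm using $\|B\|_f^2$ as a potential function, in direct analogy with the \FI{} analysis where the total number of deletions is bounded by $n/\ell$. I will work with the matrix form throughout, since $\|Ax\|^2 - \|Bx\|^2 = x^T(A^TA - B^TB)x$ immediately recovers the vector-form bounds.

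Write $B^{(i)}$ for the sketch at the end of iteration $i$ (with $B^{(0)}=0$) and $V^{(i)}$ for the matrix of right singular vectors produced at step $i$, so $V^{(i)}(V^{(i)})^T = I_\ell$. The key observation is that since $\sigma_j^2 \ge \sigma_\ell^2 = \delta_i$ for every $j$, the $\max(\cdot,0)$ in the definition of $\bar\Sigma$ is vacuous and $\bar\Sigma^2 = \Sigma^2 - \delta_i I_\ell$ as an exact identity. Combined with the fact that placing $A_i$ into the zero $\ell$th row of $B^{(i-1)}$ contributes $A_i^TA_i$ to the Gram matrix, this yields the per-step identity
\[
(B^{(i)})^TB^{(i)} = (B^{(i-1)})^TB^{(i-1)} + A_i^TA_i - \delta_i (V^{(i)})^TV^{(i)}.
\]
Telescoping gives $A^TA - B^TB = \sum_{i=1}^n \delta_i (V^{(i)})^TV^{(i)}$, a sum of PSD matrices, which immediately establishes the lower bound $A^TA \succeq B^TB$ and hence $\|Ax\|^2 \ge \|Bx\|^2$.

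For the upper bound, each $(V^{(i)})^TV^{(i)}$ is an orthogonal projection of rank $\ell$, so it has operator norm $1$ and trace $\ell$. The triangle inequality gives $\|A^TA - B^TB\| \le \sum_i \delta_i$. To control $\sum_i \delta_i$, I take the trace of the per-step identity to obtain $\|B^{(i)}\|_f^2 = \|B^{(i-1)}\|_f^2 + \|A_i\|^2 - \ell\delta_i$, which telescopes into $\|B^{(n)}\|_f^2 = \|A\|_f^2 - \ell\sum_i \delta_i$. Non-negativity of $\|B^{(n)}\|_f^2$, which is the exact analogue of ``at most $n$ items can be deleted altogether'' in \FI{}, then gives $\sum_i \delta_i \le \|A\|_f^2/\ell \le \eps \|A\|_f^2$ since $\ell = \lceil 1/\eps\rceil \ge 1/\eps$, which is the required bound.

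The main subtlety I anticipate is the clean verification that $\bar\Sigma^2 = \Sigma^2 - \delta_i I_\ell$ as an \emph{equality} (not merely a PSD inequality), since the sharpness of the residual $\delta_i (V^{(i)})^TV^{(i)}$ — both its rank-$\ell$ projection structure and its trace being exactly $\ell\delta_i$ — is what simultaneously drives the operator-norm bound and the Frobenius potential argument. Once that observation is in place, the remainder is straightforward linear bookkeeping plus the single sign constraint $\|B^{(n)}\|_f^2 \ge 0$.
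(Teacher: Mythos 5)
Your proof is correct and follows essentially the same route as the paper: the per-step Gram identity $(B^{(i)})^TB^{(i)} = (B^{(i-1)})^TB^{(i-1)} + A_i^TA_i - \delta_i (V^{(i)})^TV^{(i)}$ is exactly the paper's relation between $C^i$, $B^{i-1}$ and $B^i$, and your trace/telescoping potential argument with $\|B^{(n)}\|_f^2 \ge 0$ and $1/\ell \le \eps$ is the paper's Frobenius-norm computation. The only difference is presentational: you work entirely in matrix form (which also makes the PSD lower bound $A^TA \succeq B^TB$ explicit), whereas the paper runs the same bounds through a test vector $x$.
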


\begin{proof}
We begin by obtaining the value of $\sum_{i=1}^{n}\delta_i$ by computing the Frobenius norm of $B$.
Let $B^i$, $C^{i}$ and $V^{i}$ be the values of $B$, $C$ and $V$ after the main loop in the algorithm is executed $i$ times.
For example, $B^0$ is an all zeros matrix and $B^n$ is the returned sketch matrix.
\begin{eqnarray*}
 \|B^n\|_{f}^{2} &=& \sum_{i=1}^{n} [\|B^{i}\|_{f}^{2} - \|B^{i-1}\|_{f}^{2}] \\
&=& \sum_{i=1}^{n} [(\|C^{i}\|_{f}^{2} - \|B^{i-1}\|_{f}^{2}) - (\|C^{i}\|_{f}^{2} - \|B^{i}\|_{f}^{2})]\\
&=& \sum_{i=1}^{n} \|A_i\|^2 - tr({C^{i}}^{T}C^{i}- {B^{i}}^{T}B^{i})  \\
&=&  \|A\|_{f}^2 - \sum_{i=1}^{n} tr(\delta_i {V^{i}}^{T}V^{i}) = \|A\|_{f}^{2} - \ell\sum_{i=1}^{n}\delta_i
\end{eqnarray*}
The reason that $\|C^{i}\|_{f}^{2} - \|B^{i-1}\|_{f}^{2} = \|A_i\|^2$ is because $C^{i}$ is, up to a unitary left rotation, a matrix which contains both $B^{i-1}$ and $A_i$. Remember that the last row of $B^{i-1}$, which $A_i$ replaced, contains only zero values.
We gain that $\sum_{i=1}^{n}\delta_i = (\|A\|_{f}^{2}-\|B\|_{f}^{2})/\ell$ which we use shortly.
Now, let us compute the value of $\|Ax\|^2 - \|Bx\|^2$ for a vector $x \in \R^d$. 
\begin{eqnarray*}
\|Ax\|^2 - \|Bx\|^2 &=& \sum_{i=1}^{n}[ \langle A_i,x\rangle^2 + \|B^{i-1}x\|^2 - \|B^{i}x\|^2] \\
&=&  \sum_{i=1}^{n}[\|C^{i}x\|^2 - \|B^{i}x\|^2] \\
&\le& \sum_{i=1}^{n}\|{C^{i}}^{T}C^{i} - {B^{i}}^{T}B^{i}\|\cdot \|x\|^{2}  = \|x\|^{2} \sum_{i=1}^{n}\delta_i
\end{eqnarray*}
The first transition is due to the fact that $\langle A_i,x\rangle^2 + \|B^{i-1}x\|^2 = \|C^{i}x\|^2$.
The second transition is correct because $\|{C^{i}}^{T}C^{i} - {B^{i}}^{T}B^{i}\|= \|\delta_i {V^{i}}^{T}V^{i}\| = \delta_i$.
Substituting that $\sum_{i = 1}^{n} \delta_i = (\|A\|_{f}^{2}-\|B\|_{f}^{2})/\ell$, $\|B\|_{f}^{2} \ge 0$ and $1/\ell \le \eps$ completes the claim.
\end{proof}

\subsection{Running time}\label{run}
Let $T_{\SVD}(\ell,m)$ stand for the number of operations required to obtain the Singular Value Decomposition of an $\ell$ by $m$ matrix.
The worst case update time of \FD is $O(T_{\SVD}(\ell,m))$ which is also $O(m/\eps^2)$ operations per incoming vector. 
This is because the execution of the main loop is dominated by computating  the sketch $\SVD$.

However, there is no reason to compute the $\SVD$ of $B$ in each and every iteration.
In Algorithm \ref{alg}, consider replacing the statement 
$\delta_i \leftarrow \Sigma_{\ell,\ell}^2$ with $\delta_i \leftarrow \Sigma_{c\ell,c\ell}^2$
for some $c \in [1/10,9/10]$ and assume for simplicity that $c\ell$ is integer.
In every computation of the $\SVD$ the algorithm nullifies $(1-c)\ell$ rows of the sketch.
Therefore, the next $(1-c)\ell$ rows it receives can be places in zero valued rows.
Consequently, the $\SVD$ of the sketch is computed only once every $(1-c)\ell$ input rows.
This gives a total running time of $O(n/\ell \cdot T_{\SVD}(\ell,m))$ which is amortized $O(m/\eps)$ per row.
The proof above carries over almost without a change. The resulting approximation is slightly weakened though.
The modified algorithm only guaranties that $\|A^{T}A - B^{T}B\| \le tr(A^{T}A)/c\ell$.

\begin{remark}
From a theoretical stand point $O(T_{\SVD}(\ell,m))$ can be reduced using fast matrix multiplication.
Let $\alpha$ denote the smallest scalar such that multiplying two $\ell \times \ell$ matrices requires $O(\ell^{2+\alpha})$ operations.
To compute the $\SVD$ of $B$ we first use fast matrix multiplication to compute $BB^T$ in time $O(m\ell^{1+\alpha})$.
Then we compute $[U,S^2,U^{T}] = \SVD(BB^T)$ which requires $O(\ell^3)$ operations.
Finally, we compute the right singular vectors of $B$ by $V = S^{-1}U^{T}B$ which, using fast matrix multiplication, requires $O(m\ell^{1+\alpha})$. This reduces the theoretical computation time of the $\SVD$ to $O(m \ell^{1+\alpha} + \ell^3)$. 
Note that $\alpha < 0.5$ \cite{CohnKSU05}.
Although computing the $\SVD$ in this manner is numerically unstable and generally recommended against, 
in this case it might be beneficial.
Due to the algorithm's relatively weak approximation guaranty the accuracy loss incurred by squaring the matrix condition number might not be meaningful.
Moreover, there is no need for the $\SVD$ procedure to converge to machine precision. 

\end{remark}

\subsection{Parallelization and sketching sketches}
A convenient property of this sketching technique is that it allows for combining sketches.
In other words, let $A_1$ and $A_2$ denote two halves of a larger matrix $A$. 
Also, let $B_1$ and $B_2$ be the sketches computed by the above technique for $A_1$ and $A_2$ respectively.
Now let the final sketch, $C$, be the sketch of a matrix  $B$ which contains both $B_1$ and $B_2$.
It still holds that $\|A^{T}A - C^{T}C\| \le \eps\cdot tr(A^{T}A - C^{T}C)$.
To see this we compute $\|Cx\|^2$ for a test vector $\|x\| = 1$.
\begin{eqnarray*}
\|Cx\|^2 &\ge& \|Bx\|^{2} - \eps (\|B\|_{f}^{2} - \|C\|_{f}^{2}) \\
&=& \|B_{1}x\|^2 + \|B_{2}x\|^2  - \eps (\|B_1\|_{f}^{2} +\|B_2\|_{f}^{2}) +  \eps \|C\|_{f}^{2}\\
&\ge& \;\,\,\, \|A_{1}x\|^2 - \eps (\|A_{1}\|_{f}^2 - \|B_{1}\|_{f}^{2}) \\
&& +\|A_{2}x\|^2 - \eps  (\|A_{2}\|_{f}^2 - \|B_{2}\|_{f}^{2}) \\ 
&& - \eps (\|B_1\|_{f}^{2} +\|B_2\|_{f}^{2}) + \eps \|C\|_{f}^{2} \\
&=& \|A_{1}x\|^2 + \|A_{2}x\|^2 - \eps (\|A_{1}\|_{f}^2 + \|A_{2}\|_{f}^2) + \eps \|C\|_{f}^{2}\\
&=& \|Ax\|^{2} - \eps  (\|A\|_{f}^2 -\|C\|_{f}^{2})
\end{eqnarray*}
Here we use the fact that $\|B_{1}x\|^{2} \ge \|A_{1}x\|^{2}  - \eps (\|A_{1}\|_{f}^{2}- \|B_{1}\|_{f}^2) $ for $\|x\|=1$ which is a consequence of the derivation above.
This property is especially useful when the matrix (or data) is distributed across many machines which is often the case in modern large scale data.

\subsection{Connection to matrix low rank approximation}
Low rank approximation of matrices is a well studied problem.
The goal is to obtain a small matrix $B$ containing $\ell$ columns 
which contains in its columns space a matrix $\Pi$ of rank $k$ such that $\|A - A \Pi\|_{\xi} \le (1+\eps)\|A - A_k\|_\xi$.
Here, $A_k$ is the best rank $k$ approximation of $A$ and $\xi$ is either $2$ (spectral norm) or $f$ (Frobenius norm). 
It is difficult to compare our algorithm to this line of work since the types of bounds sought are qualitatively different.
We remark, however, that it is possible to use \FD~to produce a low rank approximation result.\\
\noindent {\bf Lamma~$4$ from \cite{DrineasKannan2003} (modified)}.
Let $P^{B}_{k}$ denote the projection matrix on the left $k$ singular vectors of $B$ corresponding to its largest singular values.
Then the following holds $\|A - AP^{B}_{k}\|^2 \le \sigma^{2}_{k+1} + 2\|A^{T}A - B^{T}B\|$ where $\sigma_{k+1}$ is the $(k+1)$'th
singular value of $A$.

Therefore, if  $2\|A^{T}A - B^{T}B\| \le \eps \sigma^{2}_{k+1}$ we have that $\|A - AP^{B}_{k}\| \le \sigma_{k+1}(1+\eps)$
which is a $1+\eps$ approximation to the optimal solution.
Letting the sketch $B$ maintain $\ell \ge 2\|A\|^{2}_{f}/\eps \sigma^{2}_{k+1}$ ensures that this is the case.
Since $\|A\|_{f}^{2} / \sigma_{k+1}^{2} \in \Omega(k)$ this is asymptotically inferior to the space requirement of \cite{BoutsidisDrineasMagdon2011}. That said, if $\|A\|_{f}^{2}/\sigma_{k+1}^{2} \in O(k)$ this is also optimal due to \cite{ClarksonWoodruff2009}.

\section{Experiments}\label{experiments}

We compere \FD~to five different techniques. 
The first two constitute a brute force and a na\"ive baseline. 
The other three are common algorithms which are commonly used in practice.
Namely: sampling, hashing, and random projection. 
These produce sketch matrices $B \in \R^{\ell \times m}$ such that $\E[B^{T} B] = A^T A$. 
The tested methods are limited in storage to an $\ell \times m$ sketch matrix $B$ and additional auxiliary variables in $o(\ell  m)$ space.
This is with the exception of the brute force algorithm.
For a given input matrix $A$ we compare the methods' computational efficiency and resulting sketch accuracy.
The computational efficiency is taken as the time required to produce $B$ from the stream of $A$'s rows. 
The accuracy of a sketch matrix $B$ is measured by $\|A^{T}A - B^{T}B\|$.

\noindent{\bf Brute Force:} the brute force approach produces the optimal rank $\ell$ approximation of $A$.
It explicitly computes the matrix $A^{T}A = \sum_{i}^{n} A_{i}^{T} A_{i}$ by aggregating the outer products of the rows of $A$. 
The final `sketch' consists of the top $\ell$ right singular vectors and 
values (square rooted) of $A^{T}A$ which are obtained by computing its $\SVD$.
The update time per row in $A$ is $O(m^2)$ and space requirement is $\Theta(m^2)$.

\noindent{\bf Na\"ive:} upon receiving a row in $A$ the na\"ive method does nothing. The sketch it returns is an all zeros $\ell$ by $m$ matrix.
This baseline is important for two reasons. 
First, it can actually be more accurate than random methods due to under sampling scaling issues.
Second, although it does not perform any computation is does incur computation overheads and I/O exactly like the other methods.
It is therefore an important benchmark in both accuracy and running time measurements.

\noindent {\bf Sampling:} each row in the sketch matrix $B^{samp}$ is chosen i.i.d. from $A_i$ and rescaled.
More accurately, each row $B^{samp}_j$ takes the value $\frac{1}{\sqrt{\ell}}\frac{\|A\|_f}{\|A_i\|}A_i$ with probability $p_i = \|A_i\|^2/\|A\|_f^2$.
The space it requires is $O(m \ell)$ in the worst case but it can be much lower if the chosen rows are sparse.
Here this is implemented as $\ell$ independent reservoir samplers, each sampling one row according to the distribution.
The update running time is therefore, $O(\ell + m)$ per row in $A$.

\noindent {\bf Hashing:} The matrix $B^{hash}$ is generated by adding or subtracting the rows of $A$ from random rows of $B^{hash}$.
More accurately, $B^{hash}$ is initialized to be an $\ell$ by $m$ all zeros matrix. 
Then, when processing $A_i$ we perform $B^{hash}_{h(i)} \leftarrow B^{hash}_{h(i)} + s(i)A_i$.
Here $h:[n]\rightarrow [\ell]$ and $s:[n]\rightarrow \{-1,1\}$ are perfect hash functions. 
There is no harm in assuming such functions exist since complete randomness is na\"ively possible without dominating either space or running time.
This method is often used in practice by the machine learning community and is referred to as `feature hashing' or `hashing trick'  \cite{WeinbergerDLSA2009}.

\noindent {\bf Random Projection:} 
The matrix $B^{proj}$ is equivalent to the matrix $RA$ where $R$ is an $\ell$ by $d$ matrix such that $R_{i,j} \in \{-1/\sqrt{\ell},1/\sqrt{\ell}\}$ uniformly. 
Since $R$ is a random projection matrix \cite{Achlioptas01} $B^{proj}$ contains the $m$ columns of $A$ randomly projected from dimension $n$ to dimension $\ell$. 
This is easily computed in a streaming fashion while requiring at most $O(m \ell)$ space and $O(m \ell)$ operation per row updated.
For proofs of correctness and usage see \cite{PapadimitriouTRV1998}\cite{Vempala2004}\cite{Sarlos06}\cite{tygert07PNAS}.

\noindent {\bf \FD:} This indicates the modified algorithm described in Section~\ref{run} with $c=1/3$. 
So, while it requires a sketch matrix of size $\ell \times m$ it night actually return a sketch of rank $\ell/3$.
Moreover, it only guaranties that $\|A^{T}A - B^{T}B\| \le 3\cdot tr(A^{T}A)/\ell$.
The benefit, however, is that its amortized running time is $O(m\ell)$ per row.


The generated input matrices $A$ contains $d$ dimensional signal and $m$ dimension noise. 
More accurately $A = S D U + N/\zeta$.
The signal coefficients matrix $S \in  \R^{n \times d}$ is such that $S_{i,j} \sim \N(0,1)$ i.i.d.
The diagonal matrix $D$ is $D_{i,i} = 1- (i-1)/d$ which gives linearly diminishing signal singular values. 
The signal row space matrix $U \in \R^{d \times m}$ contains a random $d$ dimensional subspace in $\R^{m}$, for clarity, $UU^{T} = I_{d}$.
The matrix $S D U$ is exactly rank $d$ and constitutes the signal we wish to recover. 
The matrix $N \in \R^{n \times m}$ contributes additive Gaussian noise $N_{i,j} \sim \N(0,1)$. 
Due to \cite{Vershynin08prodOfRandMatrices}, the spectral norms of $S D U$ and $N$ are expected to be the same up to some constant.
Experimentally, this constant is close to $1$.
Therefore, when the signal to noise ratio $\zeta$ is close to (or less than) $1$ we cannot expect to approximate $A^{T}A$ since the noise dominates the signal.
On the other hand, when $\zeta \in \omega(1)$ the spectral norm is dominated by the signal which is therefore recoverable.
As a remark, note that the Frobenius norm of $A$ is dominated by the noise for $\zeta \in o(\sqrt{m/d})$.

The values used in the experiments are $n = 10,000$, $m=1000$, $\ell=[10,20,\ldots,300]$, $d = [5,10,20,50,100]$, $\zeta = [1,2,\ldots,15]$.
Each method produced a sketch for each matrix, $A$, which is generated according to every parameter combination.
Each resulting sketch $B$ was measured for accuracy which is defined as $\|A^{T}A - B^{T}B\|$.
The running time for producing each sketch by the different methods was also measured.
The entire experiment was repeated $7$ times and the reported results are median values of these independent executions.
The experiments were conducted on a FreeBSD machine with 50GB RAM, and 12MB cache using a single Intel(R) Xeon(R) X5650 CPU.  
Example results are plotted and explained in Figures~\ref{diff_two_vs_ell}, \ref{running_time} and  \ref{diff_two_vs_snr}.

\begin{figure}[htbp]
\begin{center}
\includegraphics[width=\myfigurewidth\textwidth]{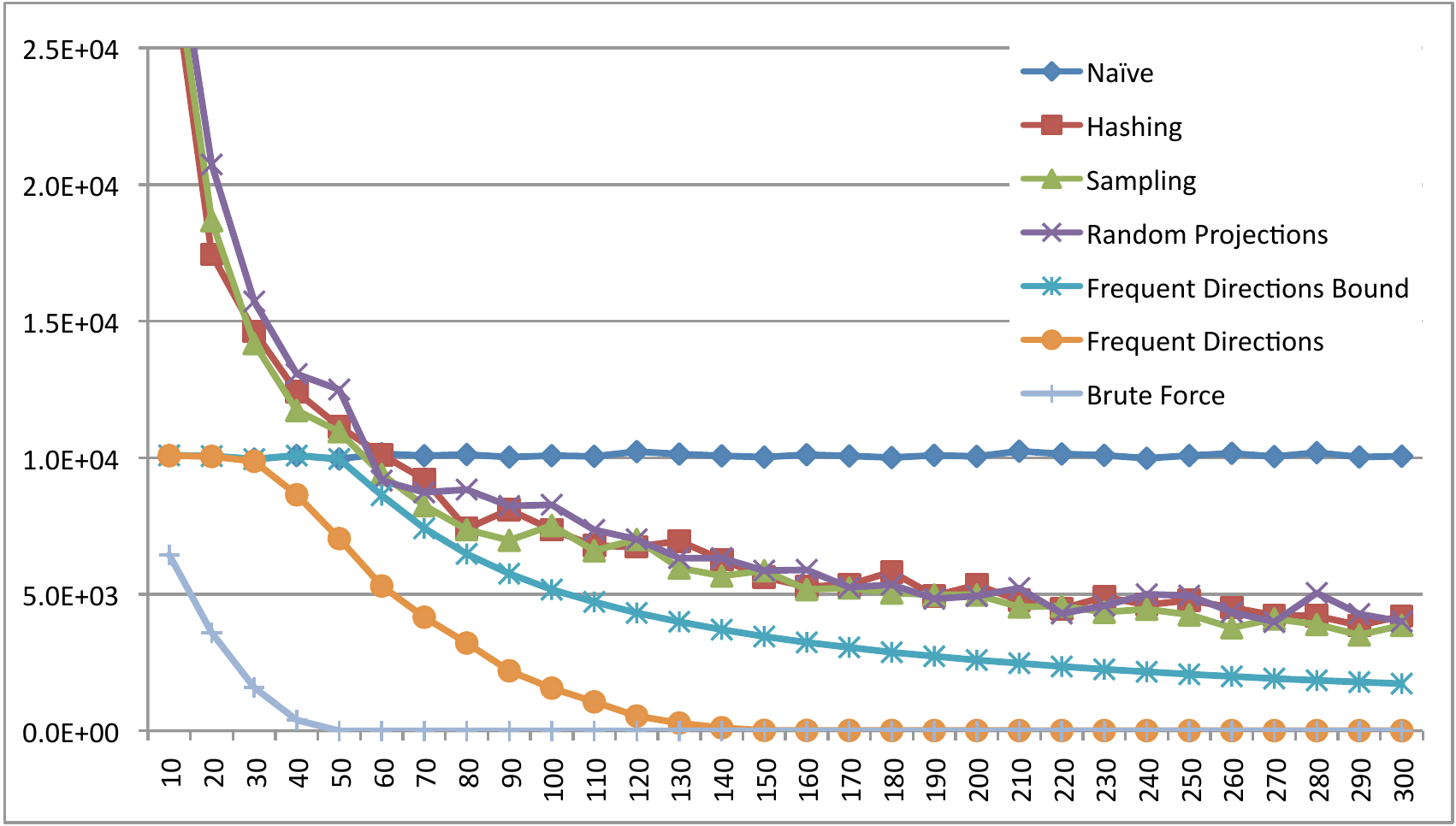}
\caption{Accuracy vs. sketch size. The $y$-axis indicates the accuracy of the sketches.
If a method returns a sketch matrix $B$ the accuracy is measured by $\|A^{T}A - B^{T}B\|$.
The size of the sketch is fixed for all methods and is $B \in \R^{\ell \times m}$.
The value of $\ell$ is indicated on the $x$-axis.
The form of the input matrix is explained in Section~\ref{experiments}. 
Here the signal dimension is $d=50$ and the signal to noise ratio is $\zeta = 10$.
There are a few interesting observations here. First, all three random techniques actually perform worse than
na\"ive for small sketch sizes. This is not the case with \FD. 
Second, the three random techniques perform equally well. This might be a result of the chosen input.
Nevertheless, practitioners should consider these methods as comparable alternatives.
The sketching technique performs significantly better than all three.
The curve indicated by ``\FD~Bound" plots the accuracy guarantied by \FD. 
While the worst case bound for \FD~is consistently better than the three competing techniques
it is still far from being tight for \FD~for large sketch sizes. 
Notice that \FD~reaches its best result already at $\ell = 150$.
This is because the signal dimension is $d=50$ and \FD is implemented with parameter $c=1/3$ (see Section~\ref{run}).
}
\label{diff_two_vs_ell}
\end{center}
\end{figure}

\begin{figure}[htbp]
\begin{center}
\includegraphics[width=\myfigurewidth\textwidth]{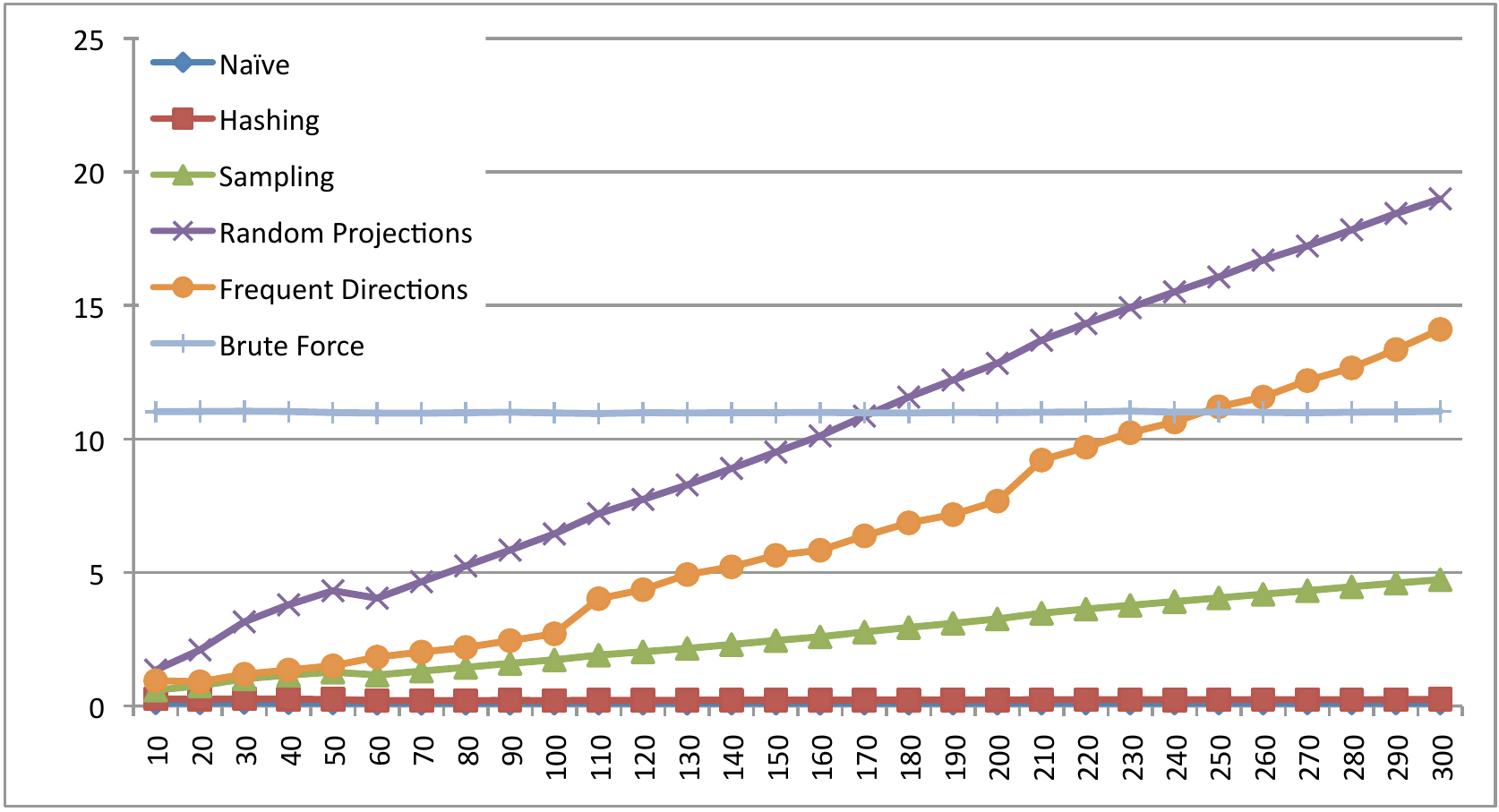}
\caption{Running time in seconds vs. sketch size. 
Each method produces a sketch matrix $B$ of size $\ell \times 1000$ for a dense $10,000 \times 1,000$ matrix. 
The value of $\ell$ is indicated by the $x$-axis. 
The total amount of computation time required to produce the sketch is indicated on the $y$-axis in seconds.
The brute force method computes the complete $\SVD$ of $A$ and therefore its running time is independent of $\ell$.
Note that Hashing is almost as fast as the Na\"ive method and independent of $\ell$ which is expected.
The rest of the methods exhibit a linear dependence on $\ell$ which is also expected.
Surprisingly though, sketching is more computationally efficient than random projections although both require $O(m \ell)$ operations per input row.
It is important to stress that the implementations above are not very well optimized. 
Different implementations might lead to slightly different results.
}
\label{running_time}
\end{center}
\end{figure}

\begin{figure}[htbp]
\begin{center}
\includegraphics[width=\myfigurewidth\textwidth]{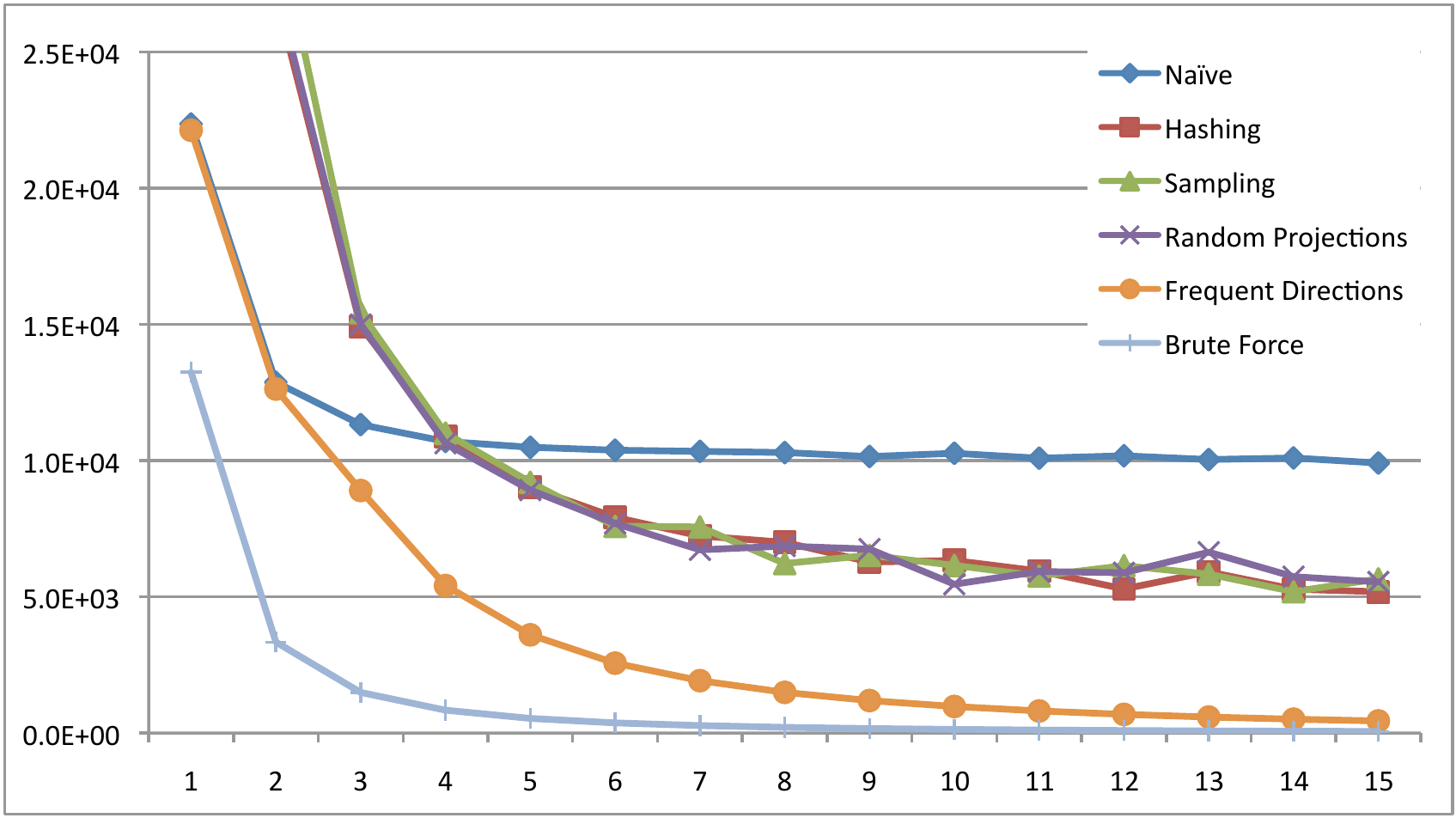}
\caption{Accuracy vs. signal to noise ratio. Here, all the sketches are of the same size $\ell=100$.
The sketches accuracies are measured as before and indicated on the $y$-axis. 
The $x$-axis gives the value of $\zeta$ the signal to noise ratio.
Here the signal dimension is $d=50$ and thus the brute force approach, 
which gives the best rank $100$ approximation of $A$, should theoretically recover $A$ almost exactly.
Note, however, that when $\zeta=1$ the best rank $100$ approximation of the matrix is quite poor.
That means that spectrum of $A$ is dominated by the noise. 
As $\zeta$ grows, the noise diminishes and the rows of $A$ become more concentrated around the $d$ dimensional signal row space.
Note that all the sketching techniques' results improve when the noise diminishes, as expected.  
}
\label{diff_two_vs_snr}
\end{center}
\end{figure}

\section{Discussion}
This paper draws upon a surprising similarity between two problems, the item frequency approximation problem and the matrix sketching problem.
It seems that, in general, solutions to the first can be modified to solve the second but incur 
an additional factor of $m$ in both running time and space requirement. 
This is true, for example, about sampling.
It is also the case for the memory footprint of \FI~which is $O(1/\eps)$ while for \FD~it is $O(m/\eps)$. 
But, the update time of \FI~is $O(1)$ and that of \FD~is $O(m/\eps)$. 
It is natural to seek a modified algorithm which exhibits an $O(m)$ update time.
Another question is whether more advanced algorithms for fining frequent items in streams could also be carried over.
A good candidate is the {\it Count Sketch} algorithm \cite{Charikar2002}. 
Alas, it depends on item hashing in a way which does not naturally translate to the matrix sketching domain.

\vspace{.5cm}
\noindent {\bf Acknowledgments:} The author truly thanks Petros Drineas, Jelani Nelson, Nir Ailon, Zohar Karnin, and Yoel Shkolnisky for very helpful discussions and pointers.
 
\pagebreak
\bibliographystyle{unsrt}
\bibliography{simpleMatrixSketching}

\end{document}